\definecolor{darkred}{rgb}{0.7, 0.0, 0.0}
\newcommand{\comment}[1]{}
\newtheorem{proof}{Proof}
\newcounter{exacounter}
\newenvironment{exa}{
\refstepcounter{exacounter}
\smallskip\noindent
\textbf{Example \theexacounter.}
}{\vspace{2mm}}
\tikzset{initial text={}}
\tikzset{every picture/.style=semithick} 
\tikzset{>=stealth'} 
\tikzset{->} 
\tikzset{shorten >=1pt}
\newcommand{\buchi}{B\"uchi\xspace}
\newcommand{\win}{\mathsf{win}}
\newcommand{\B}{\mathbb{B}}
\newcommand{\N}{\mathbb{N}}
\newcommand{\design}{\mathcal{D}}
\newcommand{\shield}{\mathcal{S}}
\newcommand{\game}{\mathcal{G}}
\newcommand{\gstates}{G}
\newcommand{\ginit}{g_0}
\newcommand{\states}{Q}
\newcommand{\init}{q_0}
\newcommand{\din}{I}
\newcommand{\dinalph}{\Sigma_I}
\newcommand{\dinletter}{{\sigma_I}}
\newcommand{\dintrace}{{\overline{\sigma_I}}}
\newcommand{\dout}{O}
\newcommand{\doutalph}{\Sigma_O}
\newcommand{\doutletter}{{\sigma_O}}
\newcommand{\douttrace}{{\overline{\sigma_O}}}
\newcommand{\dalph}{\Sigma}
\newcommand{\dletter}{\sigma}
\newcommand{\dtrace}{\overline{\dletter}}
\newcommand{\lang}{L}
\newcommand{\langset}{\mathcal{L}}
\newcommand{\spec}{\varphi}
\newcommand{\kin}{\!\in\!}
\newcommand{\err}{{\color{darkred}$\lightning$}}
\newcommand{\comp}{\circ}
\begin{document}

\title{Synthesis of Admissible Shields%
%
\thanks{This work was supported in part by the Austrian Science Fund
  (FWF) through the research network RiSE (S11406-N23),
  and by the European Commission through the project IMMORTAL (644905).}}

\author{Laura Humphrey\inst{1}, Bettina K\"onighofer\inst{2},
        Robert K\"onighofer\inst{2}, Ufuk Topcu\inst{3}}

\institute{
   $^1$ Control Science Center of Excellence, AFRL\\
   $^2$ IAIK, Graz University of Technology,  Austria\\
   $^3$ University of Texas at Austin, USA\\
          }
\maketitle

\begin{abstract}
\emph{Shield synthesis} is an approach to enforce a set of
safety-critical properties of a reactive system at runtime. A shield
monitors the system and corrects any erroneous output values
instantaneously. The shield deviates from the given outputs as little
as it can and recovers to hand back control to the system as soon as
possible.
This paper takes its inspiration from a case study on mission
planning for unmanned aerial vehicles (UAVs) in which
\emph{$k$-stabilizing} shields, which guarantee recovery in a finite
time, could not be constructed.  We introduce the notion of
\emph{admissible} shields, which improves \emph{$k$-stabilizing}
shields in two ways: (1) whereas $k$-stabilizing shields take an
adversarial view on the system, admissible shields take a
collaborative view. That is, if there is no shield that guarantees
recovery within $k$ steps regardless of system behavior, the
admissible shield will attempt to work with the system to recover as
soon as possible. (2) Admissible shields can handle system failures
during the recovery phase.
In our experimental results we show that for UAVs, we can generate
admissible shields, even when $k$-stabilizing shields do not exist.
\end{abstract}

\section{Introduction}

Technological advances enable the development of increasingly sophisticated systems.
Smaller and faster microprocessors, wireless networking, 
and
new theoretical results in areas such as machine learning and intelligent control
are paving the way for transformative technologies across a variety of domains
-- self-driving cars that have the potential to reduce accidents, traffic, energy consumption, and pollution;
and unmanned systems that can safely and efficiently operate on land, under water, in the air, and in space.
However, in each of these domains, concerns about
safety are being raised \cite{4772749},\cite{dalamagkidis2011integrating}.
Specifically, there is a concern that due to the complexity of such systems,
traditional test and evaluation approaches will not be sufficient for finding errors,
and alternative approaches such as those provided by formal methods are needed \cite{Lygeros96}.

Formal methods are often used to verify systems at design time, but this is not always realistic.
Some systems are simply too large to be fully verified.
Others, especially systems that operate in rich dynamic environments or those that continuously adapt their behavior through methods such as machine learning, cannot be fully modeled at design time.
Still others may incorporate components that have not been previously verified and cannot be modeled,
e.g., proprietary components or pre-compiled code libraries.

Also, even systems that have been fully verified at design time may be subject to external faults such as
those introduced by unexpected hardware failures or human inputs.
One way to address this issue is to model nondeterministic behaviours
(such as faults) as disturbances,
and verify the system with respect to this disturbance model~\cite{Mancini14}.
However, it is impossible to model all potential unexpected behavior at design time.

An alternative in such cases is to perform \emph{runtime verification} to detect violations
of a set of specified properties while a system is executing \cite{leucker2009}.
An extension of this idea is to perform \emph{runtime enforcement} of specified properties,
in which violations are not only detected but also overwritten
in a way that specified properties are maintained.

A general approach for runtime enforcement of specified properties is
\emph{shield synthesis}, in which a shield monitors the system and
instantaneously overwrites incorrect outputs.  A shield must ensure
both \emph{correctness}, i.e., it corrects system outputs such that
all properties are always satisfied, as well as \emph{minimum
  deviation}, i.e., it deviates from system outputs only if necessary
and as rarely as possible.  The latter requirement is important
because the system may satisfy additional noncritical properties that
are not considered by the shield but should be retained as much as
possible.

Bloem et al.~\cite{BloemKKW15} proposed the notion of $k$-stabilizing
shields. Since we are given a safety specification, we can identify
wrong outputs, that is, outputs after which the specification is
violated (more precisely: after which the environment can force the
specification to be violated). A wrong trace is then a trace that ends
in a wrong output. The idea of shields is that they may modify the
outputs so that the specification always holds, but that such
deviations last for at most $k$ consecutive steps after a wrong
output.  If a second violation happens during the $k$-step recovery
phase, the shield enters a mode where it only enforces correctness,
but no longer minimizes the deviation.  This proposed approach has two
limitations with significant impact in practice.  (1) The
$k$-stabilizing shield synthesis problem is unrealizable for many
safety-critical systems, because a finite number of deviations cannot
be guaranteed. (2) $k$-stabilizing shields make the assumption that
there are no further system errors during the recovery phase.

In this paper, we introduce \emph{admissible} shields, which overcome
the two issues of $k$-stabilizing shields. To address shortcoming (1),
we guarantee the following: (a) Admissible shields are subgame
optimal. That is, for any wrong trace, if there is a finite number $k$
of steps within which the recovery phase can be guaranteed to end, the
shield will always achieve this. (b) The shield is \emph{admissible},
that is, if there is no such number $k$, it always picks a deviation
that is optimal in that it ends the recovery phase as soon as possible for
some possible future inputs. (This is defined in more detail
below.) As a result, admissible shields work well in settings in which
finite recovery can not be guaranteed, because they guarantee
correctness and may well end the recovery period if the system does
not pick adversarial outputs. To address shortcoming (2), admissible
shields allow arbitrary failure frequencies and in particular failures
that arrive during recovery, without losing the ability to recover.

As a second contribution, we demonstrate the use of admissible shields
through a case study involving mission planning for an unmanned aerial vehicle (UAV).
Manually creating and executing mission plans that meet mission
objectives while addressing all possible contingencies is
a complex and error-prone task. Therefore, having a shield that
changes the mission only if absolutely necessary to enforce certain
safety properties has the potential to lower the burden on human
operators, and ensures safety during mission execution. We show that
admissible shields are applicable in this setting, whereas
$k$-stabilizing shields are not.

\textbf{Related Work:}
Our work builds on synthesis of reactive systems \cite{Pnueli1989}, \cite{BloemJPPS12}
and reactive mission plans~\cite{EhlersKB15} from formal specifications,
and our method is related to synthesis of robust~\cite{BloemCGHHJKK14} and
error-resilient~\cite{EhlersT14} systems.
However, our approach differs in that we do not synthesize an entire system,
but rather a shield that considers only a small set of properties and corrects
the output of the system at runtime.
Li et al.~\cite{LiSSS14} focused on the problem of synthesizing a semi-autonomous controller
that expects occasional human intervention for correct operation.
A human-in-the-loop controller monitors past and current information about the system and its
environment. The controller invokes the human operator only when it is necessary,
but as soon as a specification is violated ahead of time, such that the human operator
has sufficient time to respond.
Similarly, our shields monitor the behavior of systems at run time,
and interfere as little as possible.
Our work relates to more general work on runtime enforcement of properties \cite{FalconeFM12}, but
shield synthesis~\cite{BloemKKW15} is the first appropriative work for reactive systems,
since shields act on erroneous system outputs immediately without delay.
While \cite{BloemKKW15} focuses on shield synthesis for
systems assumed to make no more than one error every $k$ steps, this work
assumes only that systems generally have cooperative behavior with respect to the shield,
i.e., the shield ensures a finite number of deviations if the system chooses certain outputs.
This is similar in concept to cooperative synthesis as considered in \cite{BloemEK15},
in which a synthesized system has to satisfy a set of properties (called guarantees)
only if certain environment assumptions hold.
The authors present a synthesis procedure that
maximizes the cooperation between system and environment for satisfying
both guarantees and assumptions as far as possible.

\textbf{Outline:} In what follows, we begin in Section \ref{sec:ex} by motivating the need for
admissible shields through a case study involving
mission planning for a UAV.
In Sections \ref{sec:prelim}, \ref{sec:def}, \ref{sec:sol}, we define preliminary concepts,
review the general shield synthesis framework, and describe our approach for synthesizing admissible shields.
Section \ref{sec:exp} provides experimental results,
and Section \ref{sec:conc} concludes.

\section{Motivating Example}
\label{sec:ex}

In this section, we apply shields on a scenario in which a UAV must
maintain certain properties while performing a surveillance mission in a dynamic environment.
We show how a shield can be used to enforce the desired properties,
where a human operator in conjunction with a lower-level autonomous planner
is considered as the reactive system that sends commands to the UAV's
autopilot.  We discuss how we would intuitively want a shield to
behave in such a situation. We show that the \emph{admissible} shields
provide the desired behaviors and address the limitations of $k$-stabilizing shields.

To begin, note that a common UAV control architecture consists of a ground control station
that communicates with an autopilot onboard the UAV \cite{chao_2010}.
The ground control station receives and displays updates from the autopilot on the UAV's state,
including position, heading, airspeed, battery level, and sensor imagery.
It can also send commands to the UAV's autopilot,
such as waypoints to fly to.
A human operator can then use the ground control station to plan waypoint-based
routes for the UAV, possibly making modifications during mission execution to respond to events observed through the UAV's sensors.
However, mission planning and execution can be very workload intensive,
especially when operators are expected to control multiple UAVs simultaneously \cite{donmez_2010a}.
To address this issue, methods for UAV command and control have been explored
in which operators issue high-level commands, and automation carries out
low-level execution details.

Several errors can occur in this type of human-automation paradigm \cite{chen_2012}.
For instance, in issuing high-level commands to the low-level planner,
a human operator might neglect required safety properties due to
high workload, fatigue, or an incomplete understanding
of exactly how the autonomous planner might execute the command.
The planner might also neglect these safety properties either because of software errors or by design. Waypoint commands issued by the operator or planner
could also be corrupted by software that translates
waypoint messages between ground station and autopilot specific formats or
during transmission over the communication link.

As the mission unfolds, waypoint commands will be sent periodically to the autopilot.
If a waypoint that violates the properties is received,
a shield that monitors the system inputs and can overwrite the waypoint
outputs to the autopilot would be able to make
corrections to ensure the satisfaction of the desired properties.

Consider the mission map in Fig.~\ref{fig:map} \cite{Feng16}, which contains three tall buildings (illustrated as blue blocks),
over which a UAV should not attempt to fly.
It also includes two unattended ground sensors (UGS) that provide data
on possible nearby targets, one at location $loc_1$ and one at $loc_x$,
as well as two locations of interest, $loc_y$ and $loc_z$.
The UAV can monitor $loc_x$, $loc_y$, and $loc_z$ from several nearby vantage points.
The map also contains a restricted operating zone
(ROZ), illustrated with a red box, in which flight might be dangerous,
and the path of a possible adversary that should be avoided
(the pink dashed line).
Inside the communication relay region (large green area), communication links
are highly reliable. Outside this region,
communication relies on relay points with lower reliability.
\begin{figure}[tb]
\centering
\includegraphics[width=3in]{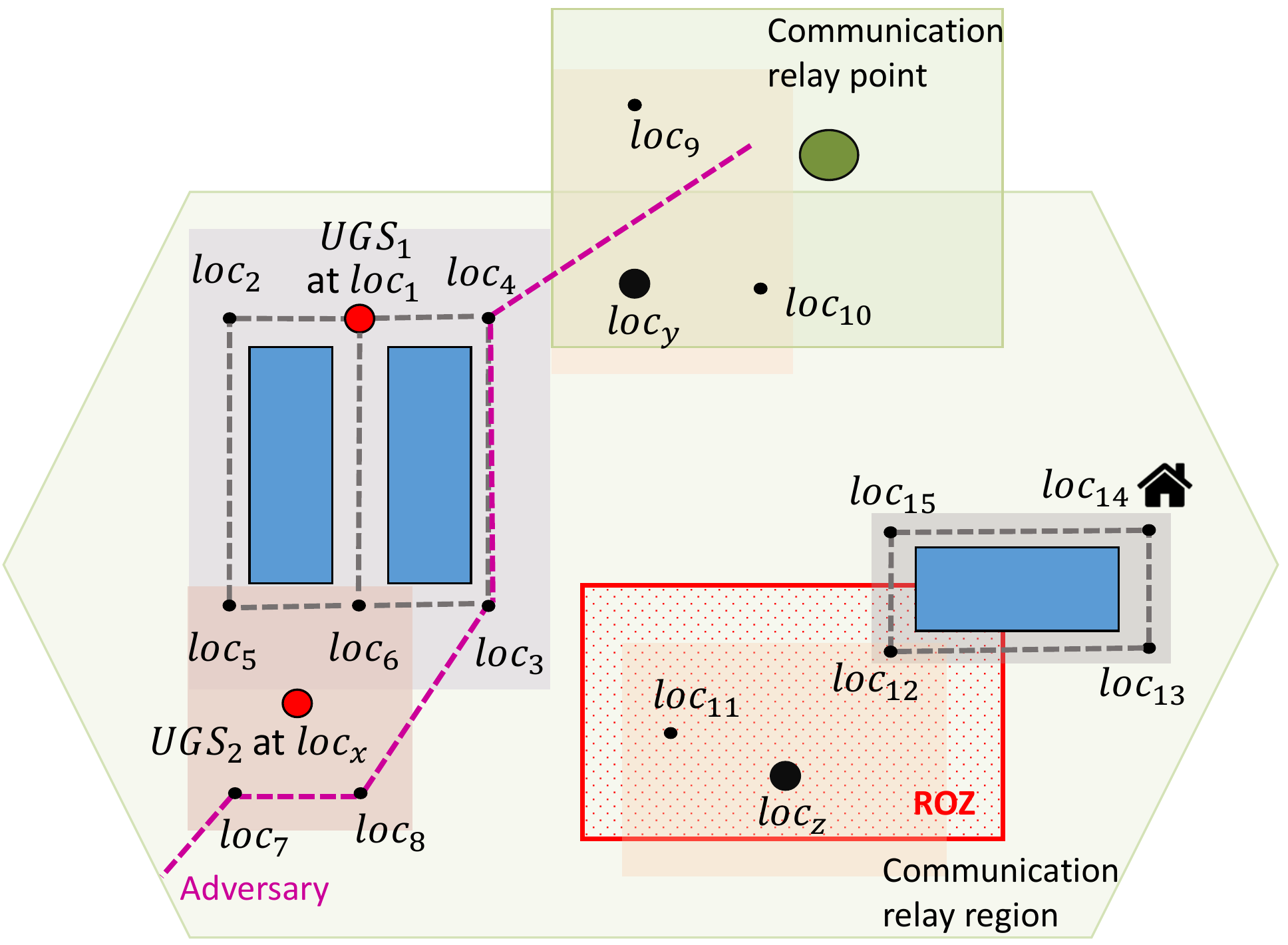}
\caption{A map for UAV mission planning.}
\label{fig:map}
\end{figure}
%
%
Given this scenario, properties of interest include: 

\begin{enumerate}
  \item \textbf{Connected waypoints.} \label{connected}
  The UAV is only allowed to fly to directly connected waypoints.
  \item \textbf{No communication.} The UAV is not allowed to stay in
  a location with reduced communication reliability.
  \item \textbf{Restricted operating zones.} \label{ROZ}
  The UAV has to leave a ROZ within 2 time steps.
  \item \textbf{Detected by an adversary.} Locations on the
  adversary's path cannot be visited more than once over any window of 3 time steps.\label{adversary}
  \item \textbf{UGS.} If a UGS reports a possible nearby target, the UAV
  should visit a respective waypoint within 7 steps
  (for $UGS_1$ visit $loc_1$, for $UGS_2$ visit $loc_5$, $loc_6$, $loc_7$, or $loc_8$).\label{ugs}
  \item \textbf{Go home.} Once the UAV's battery is low, it
  should return to a designated landing site at $loc_{14}$ within 10 time steps.\label{home}
\end{enumerate}
The task of the shield is to ensure these properties during operation.
In this setting, the operator in conjunction with a lower-level planner
acts as a reactive system that responds to mission-relevant inputs;
in this case data from the UGSs and a signal indicating whether the battery is low.
In each step, the next waypoint is sent to the autopilot, which is
encoded in a bit representation via outputs
$l_4$, $l_3$, $l_2$, and $l_1$.
We attach the shield as shown in Fig.~\ref{fig:attach_shield}.
The shield monitors mission inputs and waypoint outputs,
correcting outputs immediately if a violation of the safety properties becomes unavoidable.

We represent each of the properties by a safety automaton, the
product of which serves as the shield specification.
Fig.~\ref{fig:model_map} models the ``connected waypoints'' property,
where each state represents a waypoint with the same number.
Edges are labeled by the values of the variables $l_4\dots l_1$.
For example, the edge leading from state $s_5$
to state $s_6$ is labeled by
$\neg l_4 l_3 l_2 \neg l_1$.
For clarity, we drop the labels of edges in Fig.~\ref{fig:model_map}.
The automaton also includes an error state, which is not shown.
Missing edges lead to this error state, denoting forbidden situations.

How should a shield behave in this scenario?
If the human operator wants to monitor
a location in a ROZ, he or she would like to simply command the UAV
to ``monitor the location in the ROZ and stay there'',
with the planner handling the execution details.
If the planner cannot do this while
meeting all the safety properties, it is appropriate for the shield to revise its outputs.
Yet, the operator would still expect his or her commands to be followed
to the maximum extent possible, leaving the ROZ when necessary and returning whenever possible. Thus,
the shield should minimize deviations from the operator's
directives as executed by the planner.

\begin{figure}[tb]
\vspace{-18pt}
\begin{minipage}{\linewidth}
  \centering
  \begin{minipage}{0.42\linewidth}
    \begin{figure}[H]
      \centering
\includegraphics[width=2in]{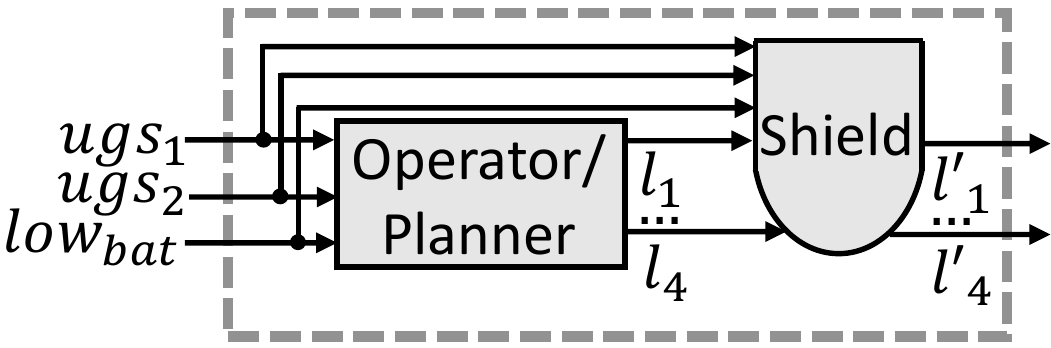}
\caption{The interaction between the operator/planner (acting as a reactive system) and the shield.}
\label{fig:attach_shield}
    \end{figure}
  \end{minipage}
  \hspace{0.02\linewidth}
  \begin{minipage}{0.5\linewidth}
    \begin{figure}[H]
      \centering
\includegraphics[width=2.3in]{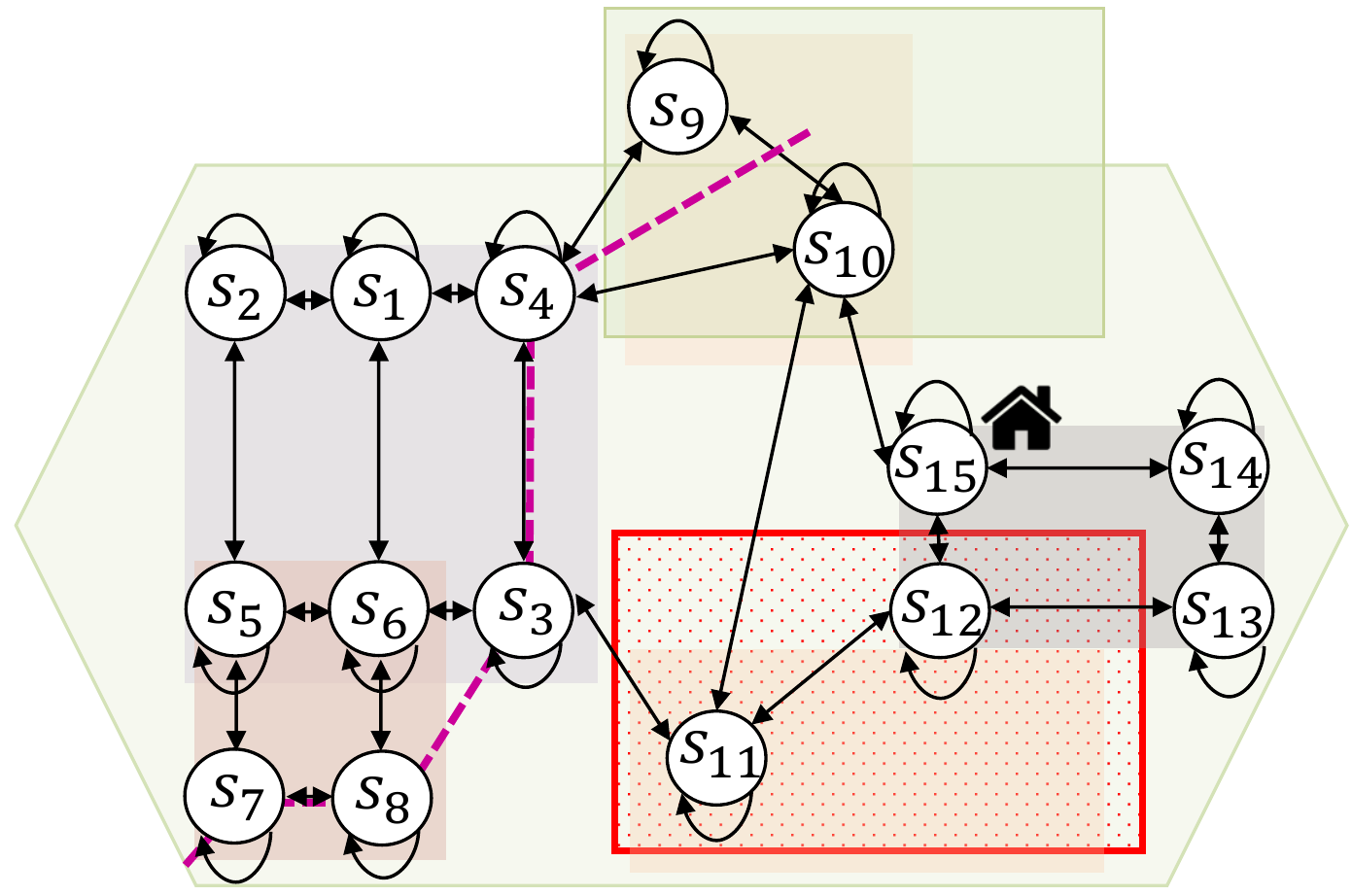}
\caption{Safety automaton of Property~\ref{connected} over the map in Fig.~\ref{fig:map}.}
\label{fig:model_map}
    \end{figure}
  \end{minipage}
\end{minipage}
\end{figure}

\textbf{Using a $k$-stabilizing shield.}
As a concrete example, assume the UAV is currently at $loc_3$,
and the operator commands it to monitor $loc_{12}$.
The planner then sends commands to fly to $loc_{11}$ then $loc_{12}$,
which are accepted by the shield.
The planner then sends a command to loiter at $loc_{12}$,
but the shield must overwrite it to maintain Property~\ref{ROZ},
which requires the UAV to leave the ROZ within two time steps.
The shield instead commands the UAV to go to $loc_{15}$.
Suppose the operator then commands the UAV to fly to $loc_{13}$, while
the planner is still issuing commands as if the UAV is at $loc_{12}$.
The planner then commands the UAV to fly to $loc_{13}$, but
since the actual UAV cannot fly from $loc_{15}$ to $loc_{13}$ directly, the
shield directs the UAV to $loc_{14}$ on its way to $loc_{15}$.
The operator might then respond to a change in the mission and
command the UAV fly back to $loc_{12}$, and the shield again deviates
from the route assumed by the planner, and directs the UAV back to
$loc_{15}$, and so on.
Therefore, a single specification violation can lead to an infinitely long deviation between
the UAV's actual position and the UAV's assumed position.
A $k$-stabilizing shield is allowed to deviate from the planner's commands
for at most $k$ consecutive time steps.
Hence, no $k$-stabilizing shield exists.

\textbf{Using an admissible shield.}
Recall the situation in which the shield
caused the actual position of the UAV to ``fall behind'' the position assumed by the planner,
so that the next waypoint the planner issues is two or more steps away from the UAV's current
waypoint position.
The shield should then implement a best-effort strategy to ``synchronize'' the UAV's
actual position with that assumed by the planner.
Though this cannot be guaranteed,
the operator and planner are not adversarial towards the shield,
so it will likely be possible to achieve this re-synchronization,
for instance when the UAV goes back to a previous waypoint or remains at the current waypoint
for several steps.
This possibility motivates the concept of an \emph{admissible} shield.
Assume that the actual position of the UAV is $loc_{14}$ and the its assumed position
is $loc_{13}$. If the operator commands the
UAV to loiter at $loc_{13}$, the shield will be able to catch up
with the state assumed by the planner and to end the deviation by the next
specification violation.

\section{Preliminaries}
\label{sec:prelim}

We denote the Boolean domain by $\B=\{\true,\false\}$, the set of
natural numbers by $\N$, and abbreviate $\N\cup\{\infty\}$ by
$\N^\infty$.
We consider a reactive system with a finite set
$\din=\{i_1,\ldots,i_m\}$ of Boolean inputs and a finite set
$\dout=\{o_1,\ldots,o_n\}$ of Boolean outputs.  The input alphabet is
$\dinalph=2^\din$, the output alphabet is $\doutalph=2^O$, and
$\dalph=\dinalph \times \doutalph$. The set of finite (infinite) words
over $\dalph$ is denoted by $\dalph^*$ ($\dalph^\omega$), and
$\dalph^{\infty} = \dalph^* \cup \dalph^\omega$.  We will also refer
to words as \emph{(execution) traces}.  We write $|\dtrace|$ for the
length of a trace $\dtrace\in \dalph^*$. For $\dintrace = x_0
x_1 \ldots \in \dinalph^\infty$ and $\douttrace = y_0 y_1 \ldots \in
\doutalph^\infty$, we write $\dintrace || \douttrace$ for the
composition $(x_0,y_0) (x_1,y_1) \ldots \in \dalph^\infty$. A set $\lang
\subseteq  \dalph^\infty$ of words is called a \emph{language}.
We denote the set of all languages as $\langset = 2^{\dalph^\infty}$.

\noindent
\textbf{Reactive Systems.}
A \emph{Mealy machine} (reactive system, design) is a 6-tuple $\design  = (\states, \init, \\\dinalph, \doutalph, \delta, \lambda)$, where $\states$ is a
finite set of states, $\init\in \states$ is the initial state, $\delta:
\states \times \dinalph \rightarrow \states$ is a complete transition
function, and $\lambda: \states \times \dinalph \rightarrow \doutalph$
is a complete output function.  Given the input trace $\dintrace = x_0
x_1 \ldots \in \dinalph^\infty$, the system $\design$ produces the
output trace $\douttrace = \design(\dintrace) = \lambda(q_0, x_0)
\lambda(q_1, x_1) \ldots \in \doutalph^\infty$, where $q_{i+1} =
\delta(q_i, x_i)$ for all $i \ge 0$.  The set of words produced by
$\design$ is denoted $\lang(\design) = \{\dintrace || \douttrace \in
\dalph^\infty \mid \design(\dintrace) = \douttrace\}$. 

Let $\design = (\states, \init, \dinalph, \doutalph, \delta, \lambda)$
and $\design' = (\states', \init', \dalph, \doutalph, \delta',
\lambda')$ be reactive systems.
A serial composition of $\design$ and $\design'$ is realized if the input and output
of $\design$ are fed to $\design'$. We denote such composition as $\design \comp \design'=(\hat{\states}, \hat{\init}, \dinalph, \doutalph, \hat{\delta},
\hat{\lambda})$, where
$\hat{\states} = \states \times \states'$,
$\hat{\init} = (\init, \init')$,
$\hat{\delta}((q,q'),\dinletter) = (\delta(q,\dinletter),
                   \delta'(q',(\dinletter,\lambda(q,\dinletter))))$, and
$\hat{\lambda}((q,q'),\dinletter) =
                     \lambda'(q',(\dinletter,\lambda(q,\dinletter)))$.

\noindent
\textbf{Specifications.}
A \emph{specification} $\spec$ is a set $\lang(\spec) \subseteq
\dalph^\infty$ of allowed traces.
$\design$ \emph{realizes} $\spec$, denoted by $\design \models \spec$, iff
$\lang(\design) \subseteq \lang(\spec)$. A specification $\spec$ is
\emph{realizable} if there exists a design $\design$ that realizes it.
%
%
%
A \emph{safety} specification $\spec^s$ is represented by an automaton
$\spec^s = (\states, \init, \dalph, \delta, F)$, where $\dalph =
\dinalph\cup\doutalph$, $\delta : \states \times \dalph \rightarrow
\states$, and $F\subseteq \states$ is a set of safe states.  The
\emph{run} induced by trace $\dtrace = \dletter_0 \dletter_1 \ldots
\in \dalph^\infty$ is the state sequence $\overline{q} = q_0 q_1
\ldots $ such that $q_{i+1} = \delta(q_i, \dletter_i)$; the run is
\emph{accepting} if $\forall i\geq 0 \scope q_i \in F$.  Trace
$\dtrace$ (of a design $\design$) \emph{satisfies} $\spec^s$ if the
induced run is accepting. The \emph{language}
$\lang(\spec^s)$ is the set of all traces satisfying $\spec^s$.

\noindent
\textbf{Games.}
A (2-player, alternating) \emph{game} is a tuple $\game = (\gstates,
\ginit, \dinalph, \doutalph, \delta, \win)$,
where $\gstates$ is a finite set of game states, $\ginit \in \gstates$ is the initial state,
$\delta: \gstates \times \dinalph \times \doutalph \rightarrow \gstates$
is a complete transition function, and $\win: \gstates^\omega
\rightarrow \B$ is a winning condition.  The game is played by two
players: the system and the environment.  In every state $g\in \gstates$
(starting with $\ginit$), the environment first chooses an input letter
$\dinletter \in \dinalph$, and then the system chooses some output
letter $\doutletter \in \doutalph$. This defines the next state $g' =
\delta(g,\dinletter, \doutletter)$, and so on. Thus, a (finite or
infinite) word over $\Sigma$ results in a (finite or infinite) \emph{play}, a
sequence $\overline{g} = g_0 g_1 \ldots$ of game states.  A play is \emph{won} by the system iff
$\win(\overline{g})$ is $\true$.
A \emph{safety game} defines $\win$ via a set $F^s\subseteq \gstates$ of
safe states: $\win(g_0 g_1 \ldots)$ is $\true$ iff $\forall i \geq 0
\scope g_i \in F^s$, i.e., if only safe states are visited.
Let $\inf(\overline{g})$ denote the states that occur infinitely often in $\overline{g}$. A \emph{\buchi game} defines $\win$ via a set $F^b\subseteq \gstates$ of accepting states: $\win(\overline{g})$ is $\true$ iff $\inf(\overline{g}) \cap F^b \neq \emptyset$.

It is easy to transform a safety specification into a safety game such
that a trace satisfies the specification iff the corresponding play is
won. Given a safety specification $\spec^s$. A finite trace $\dtrace
\in \dalph^*$ is \emph{wrong}, if the corresponding play is not won,
i.e., if there is no way for the system to guarantee that any
extension of the trace satisfies the specification. An \emph{output}
is called \emph{wrong}, if it makes a trace wrong; i.e., given
$\spec^s$, a trace $\dtrace \in \dalph^*$ an input $\dinletter \in
\dinalph$, and an output $\doutletter \in \doutalph$, $\doutletter$ is
wrong iff $\dtrace$ is not wrong, but $\dtrace \cdot
(\dinletter,\doutletter)$ is.

A deterministic (memoryless) \emph{strategy} for the environment is a function
$\rho_e: \gstates \rightarrow \dinalph$.
A deterministic (memoryless) \emph{strategy} for the system is a function $\rho_s:
\gstates \times \dinalph \rightarrow \doutalph$. A strategy $\rho_s$ is
\emph{winning} for the system, if \emph{for all} strategies $\rho_e$ of the environment
the play $\overline{g}$ that is constructed when defining the outputs using
$\rho_e$ and $\rho_s$ satisfies $\win(\overline{g})$.
The \emph{winning region} $W$ is the set of states
from which a winning strategy exists.
%
A strategy is \emph{cooperatively winning} if there \emph{exists} a strategy $\rho_e$
and $\rho_s$, such that the play $\overline{g}$
constructed by $\rho_e$ and $\rho_s$ satisfies $\win(\overline{g})$.

For a \buchi game $\mathcal{G}$ with accepting states $F^b$,
consider a strategy $\rho_e$ of the environment, a strategy $\rho_s$ of the system, and a state $g\in G$.
We set the distance $dist(g, \rho_e, \rho_s)=k$, if the play $\overline{g}$
defined by $\rho_e$ and $\rho_s$ reaches from $g$ an accepting state
that occurs infinitely often in $\overline{g}$ in $k$ steps.
If no such state is visited, we set $dist(g, \rho_e, \rho_s)=\infty$.
Given two strategies $\rho_s$ and  $\rho_s'$ of the system, we say that $\rho_s'$
\emph{dominates} $\rho_s$ if: (i) for all $\rho_e$ and all $g\in G$,
$dist(g,\rho_e,\rho_s')\leq dist(g,\rho_e,\rho_s)$ , and (ii) there exists
$\rho_e$ and $g\in G$ such that $dist(g,\rho_e,\rho_s')< dist(g,\rho_e,\rho_s)$.

A strategy is \emph{admissible} if there is no strategy that dominates it.


\section{Admissible Shields}
\label{sec:def}

Bloem et al.~\cite{BloemKKW15} presented the general framework for shield synthesis.
A shield has two main properties: (i) For any design, a shield ensures \emph{correctness} with respect to a specification. (ii) A shield ensures \emph{minimal deviation}. We revisit these properties
in Sec. \ref{sec:def_shields}.
The definition of minimum deviation is designed to be flexible
and different notions of minimum deviation can be realized.
$k$-stabilizing shields represent one concrete realization.
In Sec. \ref{sec:def_admissible_shields}, we present a new
realization of the minimum deviation property resulting in admissible shields.

\subsection{Definition of Shields}
\label{sec:def_shields}

A shield reads the input and output of a design as shown in Fig.~\ref{fig:attach_shield}.
We then address the two properties, correctness and minimum deviation, to be ensured by a shield.

\textbf{The Correctness Property.}  With correctness we refer to the
property that the shield corrects any design's output such that a
given safety specification is satisfied. Formally, let $\spec$ be a
safety specification and $\shield = (\states', \init', \dalph,
\doutalph, \delta', \lambda')$ be a Mealy machine. We say that
$\shield$ \emph{ensures correctness} if for any design $\design =
(\states, \init, \dinalph, \doutalph, \delta, \lambda)$, it holds that
$(\design \comp \shield) \models \spec$.

Since a shield must work for any design, the synthesis procedure
does not need to consider the design's implementation. This property
is crucial because the design may be unknown or too complex to
analyze. On the other hand, the design may satisfy additional
(noncritical) specifications that are not specified in $\spec$ but
should be retained as much as possible.

\textbf{The Minimum Deviation Property.}  Minimum deviation requires a
shield to deviate only if necessary, and as infrequently as possible.
To ensure minimum deviation, a shield can only deviate from the design
if a property violation becomes unavoidable.  Given a safety
specification $\spec$, a Mealy machine $\shield$ \emph{does not
  deviate unnecessarily} if for any design $\design$ and any trace
$\dintrace||\douttrace$ that is not wrong, we have that
$\shield(\dintrace||\douttrace) = \douttrace$. In other words, if
$\design$ does not violate $\spec$, $\shield$ keeps the output of
$\design$ intact.

A Mealy machine $\shield$ is a \emph{shield} if $\shield$ ensures
correctness and does not deviate unnecessarily.

Ideally, shields end phases of deviations as soon as possible,
recovering quickly.  This property leaves room for
interpretation. Different types of shields differentiate on how this
property is realized.

\subsection{Defining Admissible Shields}
\label{sec:def_admissible_shields}

In this section we define admissible shields using their speed of
recovery.  We distinguish between two situations.  In states of the
design in which a finite number $k$ of deviations can be guaranteed,
an admissible shield takes an adversarial view on the design:
it guarantees recovery within $k$ steps regardless of system behavior,
for the smallest $k$ possible. In
these states, the strategy of an admissible shield conforms to the
strategy of $k$-stabilizing shield.  In all other states, admissible
shields take a collaborative view: the admissible shield will attempt
to work with the design to recover as soon as possible.  In
particular, an admissible shield plays an admissible strategy, that
is, a strategy that cannot be beaten in recovery speed if the design
acts cooperatively.

We will now define admissible shields.  For failures of the system
that are corrected by the shield, we consider four phases:
\begin{enumerate}
\item The \emph{innocent phase} consisting of inputs $\dintrace$ and outputs
$\douttrace$, in which no failure occurs; i.e., $(\dintrace||\douttrace) \models \spec$.
\item The \emph{misstep phase} consisting of a input $\dinletter$ and a wrong
output $\doutletter^f$; i.e., $(\dintrace||\douttrace) \cdot (\dinletter, \doutletter^f) \not\models \spec$.
\item The \emph{deviation phase} consisting of inputs $\dintrace'$ and outputs  $\douttrace'$ in which the shield is allowed to deviate, and
for a correct output $\doutletter^c$ we have $(\dintrace||\douttrace) \cdot (\dinletter,\doutletter^c)\cdot (\dintrace'||\douttrace') \models \spec$.
\item The \emph{final phase} consisting $\dintrace''$ and $\douttrace''$ in which the shield does not
deviate, and $(\dintrace||\douttrace) \cdot(\dinletter,\doutletter^c)\cdot (\dintrace'||\douttrace') \cdot
(\dintrace''||\douttrace'') \models \spec$.
\end{enumerate}
Adversely $k$-stabilizing shields have a deviation phase of length at most $k$.

\begin{definition}
 A shield $\shield$ adversely $k$-stabilizes a trace $\dtrace =
 \dintrace || \douttrace \in \dalph^*$, if
 for any input $\dinletter\in\dinalph$ and any wrong output $\doutletter^f\in\doutalph$,
  \emph{\textbf{for any}} correct output $\doutletter^c\in\doutalph$
   and \emph{\textbf{for any}} correct trace $\dintrace' || \douttrace' \in \dalph^k$
   there exists a trace $\doutletter^\# \douttrace^\# \in\doutalph^{k+1}$ such that
   for any trace $\dintrace'' || \douttrace'' \in \dalph^{\omega}$
  such that $(\dintrace||\douttrace) \cdot(\dinletter,\doutletter^c)\cdot (\dintrace'||\douttrace') \cdot
(\dintrace''||\douttrace'') \models \spec$,
we have
$$ \shield(\dtrace \cdot (\dinletter,\doutletter^f)\cdot  (\dintrace' || \douttrace') \cdot
(\dintrace'' || \douttrace'')) = \douttrace \cdot \doutletter^\# \cdot \douttrace^\# \cdot
\douttrace''
$$
and
$$(\dintrace||\douttrace) \cdot(\dinletter,\doutletter^\#) \cdot (\dintrace'||\douttrace^\#) \cdot
(\dintrace''||\douttrace'') \models \spec.$$
\end{definition}

Note that it is not always possible to adversely $k$-stabilize a
shield for a given $k$ or even for any $k$.

\begin{definition} [Adversely $k$-Stabilizing Shields
\cite{BloemKKW15}] A shield $\shield$ is adversely $k$-stabilizing if
  it adversely $k$-stabilies any finite trace.
\end{definition}

An adversely $k$-stabilizing shield guarantees to end deviations after
at most $k$ steps and produces a correct trace under the assumption
that the failure of the design consists of a transmission error in the
sense that the wrong letter is substituted for a correct one. We use
the term \emph{adversely} to emphasize that finitely long deviations
are guaranteed for \emph{any} future inputs and outputs of the design.

\begin{definition}[Adversely Subgame Optimal Shield]
A shield $\shield$ is \emph{adversely subgame optimal} if for any
trace $\dtrace \in \dalph^*$, $\shield$ adversely $k-$stabilizes
$\dtrace$ and there exists no shield that adversely $l$-stabilizes
$\dtrace$ for any $l<k$.
\end{definition}
An adversely subgame optimal shield $\shield$ guarantees to deviate in
response to an error for at most $k$ time steps, for the smallest $k$
possible.

\begin{definition}
 A shield $\shield$ collaboratively $k$-stabilizes a trace $\dtrace =
 \dintrace || \douttrace \in \dalph^*$, if
 for any input $\dinletter\in\dinalph$ and any wrong output $\doutletter^f\in\doutalph$,
  \emph{\textbf{there exists}} a correct output $\doutletter^c\in\doutalph$,
  a correct trace $\dintrace' || \douttrace' \in \dalph^k$,
   and a trace $\doutletter^\# \douttrace^\# \in\doutalph^{k+1}$ such that
   for any trace $\dintrace'' || \douttrace'' \in \dalph^{\omega}$
  such that $(\dintrace||\douttrace) \cdot(\dinletter,\doutletter^c)\cdot (\dintrace'||\douttrace') \cdot
(\dintrace''||\douttrace'') \models \spec$,
we have
$$ \shield(\dtrace \cdot (\dinletter,\doutletter^f)\cdot  (\dintrace' || \douttrace') \cdot
(\dintrace'' || \douttrace'')) = \douttrace \cdot \doutletter^\# \cdot \douttrace^\# \cdot
\douttrace''
$$
and
$$(\dintrace||\douttrace) \cdot(\dinletter,\doutletter^\#) \cdot (\dintrace'||\douttrace^\#) \cdot
(\dintrace''||\douttrace'') \models \spec.$$
\end{definition}

\begin{definition} [Collaborative $k$-Stabilizing Shield] A shield $\shield$ is
  collaboratively
  $k$-stabilizing if it collaboratively $k$-stabilizes
  any finite trace.
\end{definition}

A collaborative $k$-stabilizing shield requires that it must be
possible to end deviations after $k$ steps, for some future input and
output of $\design$.  It is not necessary that this is possible for
all future behavior of $\design$ allowing infinitely long deviations.

\begin{definition}[Collaborative Subgame Optimal Shield]
A shield $\shield$ is \emph{collaborative subgame optimal} if for any
trace $\dtrace \in \dalph^*$, $\shield$ collaboratively $k-$stabilizes
$\dtrace$ and there exists no shield that adversely $l$-stabilizes
$\dtrace$ for any $l<k$.
\end{definition}

\begin{definition}[Admissible Shield]
A shield $\shield$ is admissible if for any trace $\dtrace$,
whenever there exists a $k$ and a shield $\shield'$ such that
$\shield'$ adversely $k$-stabilizes $\dtrace$, then $\shield$
adversely $k$-stabilizes $\dtrace$. If such a $k$ does not exist for
trace $\dtrace$, then $\shield$  collaboratively $k$-stabilizes
$\dtrace$ for a minimal $k$.
\end{definition}

An admissible shield ends deviations whenever possible.
In all states of the design $\design$ where a finite number of deviations can be guaranteed,
an admissible shield deviates for each violation for at most $k$
steps, for the smallest $k$ possible. In all other states, the shield
corrects the output in such a way that there exists design's inputs
and outputs such that deviations end after $l$ steps,
for the smallest $l$ possible.

\section{Synthesizing Admissible Shields}
\label{sec:sol}

The flow of the synthesis procedure is illustrated in
Fig.~\ref{fig:c-stab}.
Starting from a safety specification $\spec = (Q, q_{0}, \dalph, \delta,F)$
with $\dalph=\dinalph\times\doutalph$, the admissible shield synthesis procedure
consists of five steps.

\begin{figure}[tb]
  \begin{center}
    \includegraphics[width=0.85\textwidth]{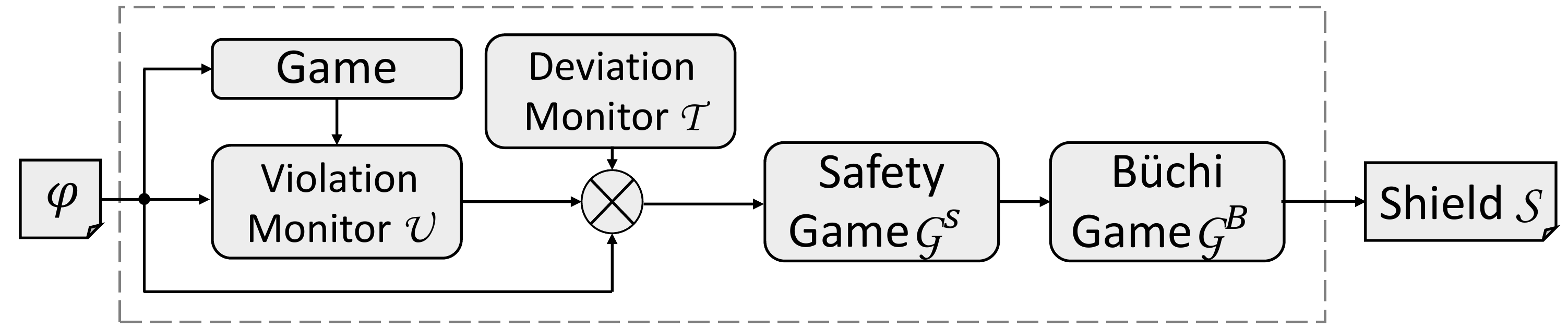}
    \caption{Outline of our admissible shield synthesis procedure.}
    \label{fig:c-stab}
  \end{center}
\end{figure}

\noindent
\subsubsection{Step 1. Constructing the Violation Monitor $\mathcal{U}$.}
\label{sec:violation_monitor}
From $\spec$ we build the automaton $\mathcal{U} = (U, u_0, \dalph, \delta^u)$ to monitor property
violations by the design.  The goal is to identify the latest point in
time from which a specification violation can still be corrected with a
deviation by the shield. This constitutes the start of the \emph{recovery}
period, in which the shield is allowed to deviate from the design.
In this phase the shield monitors the design from all states that the design could reach
under the current input and a correct output.
A second violation occurs only if the next design's output is inconsistent with all
states that are currently monitored. In case of a second violation,
the shield monitors the set of all input-enabled states that are reachable from the current set of monitored states.

The first phase of the construction of the violation monitor $\mathcal{U}$ considers
$\spec = (Q, q_{0}, \\ \dalph, \delta,F)$  as a
\emph{safety game} and computes its winning region $W\subseteq F$ so that every reactive system
$\design\models\spec$ must produce outputs such that the next state of $\spec$ stays in $W$.  Only
in cases in which the next state of $\spec$ is outside of $W$ the shield is allowed to interfere.

The second phase expands the state space $Q$ to
$2^{Q}$ via a subset construction, with the following rationale.
If the design makes a mistake (i.e., picks outputs such that
$\spec$ enters a state $q\not \in W$), we have to ``guess'' what the design
actually meant to do and we consider all output letters that would
have avoided leaving $W$ and continue monitoring the design
from all the corresponding successor states in parallel.  Thus,
$\mathcal{U}$ is essentially a subset construction of $\spec$,
where a state $u\in U$ of $\mathcal{U}$ represents a set of states in
$\spec$.

The third phase expands the state space of $\mathcal{U}$
by adding a counter $d\in\{0,1,2\}$ and a
output variable $z$.
Initially $d$ is 0.
Whenever a property is violated
$d$ is set to 2.
If $d>0$, the shield is in the recovery phase and can deviate.
If $d=1$ and there is no other violation, $d$ is decremented to 0.
In order to decide when to decrement $d$ from 2 to 1,
we add an output $z$ to the shield. If this output is set to
$\true$ and $d = 2$, then $d$ is set to 1.

The final violation monitor is $\mathcal{U} = (U, u_0, \dalph^u,
\delta^u)$, with the set of states $U = (2^{Q} \times \{0,1,2\})$,
the initial state $u_0 = (\{q_0\}, 0)$,
the input/output alphabet $\dalph^u = \dinalph \times \doutalph^u$
with $\doutalph^u = \doutalph \cup z$, and the next-state function $\delta^u$ , which obeys the following rules:
\begin{enumerate}
\item $\delta^u((u,d), (\dinletter, \doutletter)) =
      \bigl(\{q' \kin W \mid \exists q\in u, \doutletter' \in \doutalph^u
         \scope \delta(q,(\dinletter,\doutletter')) = q'\}, 2\bigr)$\\
if $\forall q \in u \scope
    \delta(q,(\dinletter, \doutletter)) \not\in W$,
and
\label{eq:subset_m}
\item $\delta^u((u,d), \dletter) \!=\!
     \bigl(\{q'\kin W \mid \exists q\kin u \scope\delta(q,\dletter) = q'\},
       \textsf{dec}(d)\bigr)$
if $\exists q \kin u \scope \delta(q,\dletter) \kin W$,
and $\textsf{dec}(0) = \textsf{dec}(1) = 0$, and if
$z$ is $\true$ then $\textsf{dec}(2) = 1$,
else $\textsf{dec}(2) = 2$.
\label{eq:subset_n}
\end{enumerate}
Our construction sets $d=2$ whenever the design leaves the winning
region, and not when it enters an unsafe state.  Hence, the shield
$\shield$ can take a remedial action as soon as ``the crime is
committed'', before the damage is detected, which would have been too
late to correct the erroneous outputs of the design.

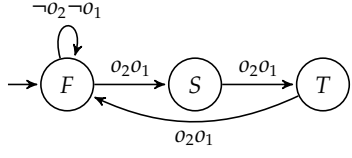
\begin{figure}[tb]
\vspace{-18pt}
\begin{minipage}{\linewidth}
  \centering
  \begin{minipage}{0.46\linewidth}
    \begin{figure}[H]
        \centering
          \centering\begin{tikzpicture}
  \node[state,draw,inner sep=1pt,minimum size=0.7cm, initial] (s0) at (0,0) {$F$};
  \node[state,draw,inner sep=1pt,minimum size=0.7cm] (s1) at (1.7,0) {$S$};
  \node[state,draw,inner sep=1pt,minimum size=0.7cm] (s2) at (3.4,0) {$T$};
  \draw (s0) edge[->] node[above] {$o_2o_1$} (s1);
  \draw (s1) edge[->] node[above] {$o_2o_1$} (s2);
  \draw (s2) edge[->, bend left=25] node[below] {$o_2o_1$} (s0);

  \draw (s0) edge[loop above] node[above] {$\neg o_2 \neg o_1$} (s0);
  \end{tikzpicture}
        \caption{Safety automaton of Example \ref{ex:monitor_U}.}
        \label{fig:ex_spec}
    \end{figure}
  \end{minipage}
  ~
  \begin{minipage}{0.49\linewidth}
    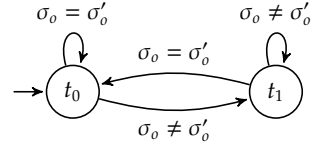
\begin{figure}[H]
        \centering
        \centering\begin{tikzpicture}
  \node[state,draw,inner sep=1pt,minimum size=0.7cm, initial] (s0) at (0,0) {$t_0$};
  \node[state,draw,inner sep=1pt,minimum size=0.7cm] (s1) at (2.7,0) {$t_1$};
  \draw (s0) edge[->, bend right=15] node[below] {$\sigma_o\neq \sigma_o'$} (s1);
  \draw (s1) edge[->, bend right=15] node[above] {$\sigma_o=\sigma_o'$} (s0);

  \draw (s0) edge[loop above] node[above] {$\sigma_o=\sigma_o'$} (s0);
  \draw (s1) edge[loop above] node[above] {$\sigma_o\neq\sigma_o'$} (s1);
  \end{tikzpicture}
        \caption{The deviation monitor $\mathcal{T}$.}
        \label{fig:dev_monitor}
    \end{figure}
  \end{minipage}
\end{minipage}
\end{figure}

\begin{exa}
\label{ex:monitor_U}
We illustrate the construction of $\mathcal{U}$ using the specification
$\spec$ from Fig.~\ref{fig:ex_spec} over the outputs $o_1$ and $o_2$.
(Fig.~\ref{fig:ex_spec} represents a safety automaton if we make all missing edges point to an
(additional) unsafe state.)
 The winning region consists of all safe
states, i.e., $W = \{F,S,T\}$.  The resulting violation monitor is
$\mathcal{U}=
(\{\text{F},\text{S},\text{T},\text{FS},\text{ST},\text{FT},\\\text{FST}\}
\times\{0,1,2\}, (\text{F},0), \dalph^u, \delta^u)$.
The transition relation $\delta^u$
is illustrated in Table~\ref{fig:ex1_table}
and lists the next states for all possible
present states and outputs.
Lightning bolts denote specification
violations.
The update of counter $d$, which is not included in
Table~\ref{fig:ex1_table}, is as follows: Whenever the design
commits a violation 
$d$ is set to $2$. If no violation exists, $d$ is decremented in the following way:
if $d=1$ or $d=0$, $d$ is set to 0. If $d=2$ and $z$ is $\true$, $d$ is set to 1, else $d$ remains 2. In this example, $z$ is set to $\true$,
whenever we are positive about the current state of the design (i.e. in $(\{F\},d)$,
$(\{S\},d)$, and $(\{T\},d)$).

Let us take a closer look at some entries of Table~\ref{fig:ex1_table}.
If the current state is $(\{F\},0)$ and we observe
the output $\neg o_2 o_1$, a specification violation occurs.
We assume that $\design$  meant to give an allowed output, either
$o_2 o_1$ or $\neg o_2 \neg o_1$. The shield continues
to monitor both $F$ and $S$; thus, $\mathcal{U}$ enters the state $(\{F,S\},2)$.
If the next observation is $o_2 o_1$, which is allowed from both
possible current states, the possible next states are $S$ and $T$, therefore
$\mathcal{U}$ traverses to state $(\{S,T\},2)$.
However, if the next observation is again $\neg o_2 o_1$,
which is neither allowed in $F$ nor in $S$, we know that a second violation occurs.
Therefore, the shield monitors the design from all three states
and $\mathcal{U}$ enters the state $(\{F,S,T\},2)$.

\begin{table}[tb]
\caption{$\delta^u$ of $\mathcal{U}$ of Example \ref{ex:monitor_U}.}
\label{fig:ex1_table}
\centering
\begin{tabular}{l|ccc}
    &$\neg o_1 \neg o_2$   &$\neg o_1 o_2$ or $o_1\neg o_2$  &$o_1o_2$ \\
\hline
\{F\}   &\{F\}      &\{F,S\}\err   &\{S\}   \\
\{S\}   &\{T\}\err  &\{T\}\err    &\{T\}   \\
\{T\}   &\{F\}\err  &\{F\}\err    &\{F\}   \\
\{F,S\}  &\{F\}      &\{F,S,T\}\err   &\{S,T\}   \\
\{S,T\}  &\{F,T\}\err &\{F,T\}\err   &\{F,T\}   \\
\{F,T\}  &\{F\}      &\{F,S,T\}\err  &\{F,S\}   \\
\{F,S,T\} &\{F\}      &\{F,S,T\}\err  &\{F,S,T\}   \\
\hline
\end{tabular}
\end{table}

\end{exa}

\noindent
\subsubsection{Step 2. Constructing the Deviation Monitor $\mathcal{T}$.}
\label{sec:deviation_monitor}
We build $\mathcal{T} = (T, t_0, \doutalph \times \doutalph, \delta^t)$
to monitor deviations between the shield and design outputs.
Here, $T = \{t_0, t_1\}$ and $\delta^t(t, (\doutletter,
\doutletter')) = t_0$ iff $\doutletter = \doutletter'$. That is, if there is a deviation in the current time step, then $\mathcal{T}$ will be in $t_1$ in the next time step. Otherwise, it will be in $t_0$.  This deviation monitor is shown in
Fig.~\ref{fig:dev_monitor}.

\noindent
\subsubsection{Step 3. Constructing and Solving the Safety Game $\mathcal{G}^s$.}
Given the automata $\mathcal{U}$ and $\mathcal{T}$ and the
safety automaton $\spec$, we construct a safety
game $\mathcal{G}^s = (G^s, g_0^s, \dinalph^s, \doutalph^s$
$\delta^s, F^s)$, which is the synchronous product of $\mathcal{U}$,
$\mathcal{T}$, and $\spec$, such that $G^s= U \times
T \times Q$ is the state space, $g_0^s = (u_0, t_0, q_0)$
is the initial state, $\dinalph^s=\dinalph\times\doutalph$ is the input of the
shield, $\doutalph^s=\doutalph\cup \{z\}$ is the output of the shield,
$\delta^s$ is the next-state function, and $F^s$ is the set of safe states such that
$\delta^s\bigl((u, t, q), (\dinletter, \doutletter),
(\doutletter',z)\bigr) = $
\[
\bigl( \delta^u[u,(\dinletter, (\doutletter,z))],
       \delta^t[t,(\doutletter, \doutletter')],
       \delta[q, (\dinletter, \doutletter')]
\bigr),
\]

\noindent and $F^s = \{(u, t, q)\in G^s \mid
             q \in F \wedge u=(w,0) \rightarrow t=t_0\}$.

We require $q \in F$, which ensures that the
output of the shield satisfies $\spec$, and that the shield can
only deviate in the recovery period (i.e., if $d=0$, no deviation is allowed).
We use standard algorithms for safety games (cf.
\cite{Faella09}) to compute the winning region $W^s$ and the most permissive non-deterministic winning strategy $\rho_s: \gstates \times \dinalph \rightarrow 2^{\doutalph}$ that is not only winning for the system, but also contains all deterministic winning strategies.

\noindent
\subsubsection{Step 4. Constructing the \buchi Game $\mathcal{G}^b$.}
Implementing the safety game ensures correctness ($\design \comp \shield \models \spec$)
and that the shield $\shield$ keeps the output of the design $\design$ intact,
if $\design$ does not violate $\spec$.
The shield still has to keep
the number of deviations per violation to a minimum.
Therefore, we would like the recovery period to be over infinitely often.
This can be formalized as a \buchi winning condition.
We construct the \buchi game $\mathcal{G}^b$
by applying the non-deterministic safety strategy $\rho^s$
to the game graph $\mathcal{G}^s$.

Given the safety game $\mathcal{G}^s=(G^s, g_0^s, \dinalph^s, \doutalph^s, \delta^s, F^s)$ with the
non-deterministic winning strategy $\rho^s$ and the winning region $W^s$, we construct a \buchi game
$\mathcal{G}^b=(G^b, g_0^b, \dinalph^s, \\\doutalph^s, \delta^b, F^b)$
such that $G^b=W^s$ is the state space, the initial state $g_0^b=g_0^s$ and
the input/output alphabet $\dinalph^b=\dinalph^s$ and $\doutalph^b=\doutalph^s$
remain unchanged, $\delta^b=\delta^s\cap\rho^s$ is the transition
function, and $F^b = \{(u, t, q)\in W^s \mid
             (u=(w,0) \vee u=(w,1))\}$ is the set of accepting states.
A play is winning if $d\leq1$ infinitely often.

\noindent
\subsubsection{Step 5. Solving the \buchi Game $\mathcal{G}^b$.}
\label{sec:solving_buchi}
Most likely, the \buchi game $\mathcal{G}^b$
contains reachable states, for which $d\leq1$ cannot be enforced infinitely often.
We implement an admissible strategy
that enforces to visit $d\leq1$ infinitely often whenever possible.
This criterion essentially asks for a strategy
that is winning with the help of the design.

The admissible strategy $\rho^b$ for a \buchi game $\mathcal{G}^b=(G^b, g_0^b, \dinalph^b, \doutalph^b, \delta^b, F^b)$ can be computed as follows\cite{Faella09}:
\begin{enumerate}
  \item Compute the winning region $W^b$ and a winning strategy $\rho_w^b$ for $\mathcal{G}^b$ (cf.\ \cite{Mazala01}).

  \item Remove all transitions that start in $W^b$ and do not belong to $\rho_w^b$ from  $\mathcal{G}^b$.
      This results in a new \buchi game
      $\mathcal{G}_1^b=(G^b, g_0^b, \dinalph^b, \doutalph^b, \delta_1^b, F^b)$
      with $(g,(\dinletter,\doutletter),g')\in\delta_1^b$
      if $(g,\dinletter, \doutletter)\in\rho_w^b$ or if
      $\forall \doutletter' \in \doutalph^b \scope (g,\dinletter, \doutletter')\notin\rho_w^b \wedge (g,(\dinletter,\doutletter),g')\in\delta^b$.

  \item In the resulting game $\mathcal{G}_1^b$, compute a cooperatively winning strategy $\rho^b$.
        In order to compute $\rho^b$, one first has to transform all input variables to output variables. This results in the \buchi game
        $\mathcal{G}_2^b=(G^b, g_0^b, \emptyset, \dinalph^b\times\doutalph^b, \delta_1^b, F^b)$. Afterwards, $\rho^b$ can be computed with the standard algorithm for the winning strategy
        on $\mathcal{G}_2^b$.
\end{enumerate}

The strategy $\rho^b$ is an admissible strategy of the game $\mathcal{G}^b$,
since it is winning and cooperatively winning~\cite{Faella09}.
Whenever the game $\mathcal{G}^b$ starts in a state of the winning region $W^b$,
any play created by $\rho_w^b$ is winning.
Since $\rho^b$ coincides with $\rho_w^b$ in all states of the winning region $W^b$,
$\rho^b$ is winning.
We know that $\rho^b$ is cooperatively winning in the game $\mathcal{G}_1^b$.
A proof that $\rho^b$ is also cooperatively winning in the original game $\mathcal{G}^b$
can be found in \cite{Faella09}.

\begin{theorem}
A shield that implements the admissible strategy $\rho^b$
in the \buchi game $\mathcal{G}^b=(G^b, g_0^b, \dinalph^b, \doutalph^b, \delta^b, F^b)$ in a new reactive system
$\shield = (G^b, g^b_0, \dinalph^b, \doutalph^b,
\delta', \rho^b)$ with $\delta'(g,\dinletter) = \empty
\delta^b(g,\dinletter,\rho^b(g,\dinletter))$ is an admissible shield.
\end{theorem}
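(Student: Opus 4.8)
The plan is to verify that $\shield$ meets the definition of an admissible shield, which asks both that $\shield$ is a shield (it ensures correctness and does not deviate unnecessarily) and that, on every finite trace, it recovers as quickly as any shield could in the adverse sense whenever that is possible, and otherwise collaboratively with a minimal bound. The first part I would extract purely from the safety game $\mathcal{G}^s$ and its safe set $F^s$; the second part from the structure of the \buchi game $\mathcal{G}^b$ (the counter $d$ and the accepting set ``$d\le 1$'') together with the properties of $\rho^b$ already isolated in Step~5.

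\textbf{$\shield$ is a shield.} Since $G^b = W^s$, $\delta^b = \delta^s\cap\rho^s$ and $\rho^b$ is a strategy of $\mathcal{G}^b$, every run of $\design\comp\shield$ --- for an \emph{arbitrary} design $\design$, which is legitimate because the shield only reads the pair $(\dinletter,\doutletter)$ emitted by $\design$ --- projects onto a play of $\mathcal{G}^s$ that follows the most permissive safety strategy $\rho^s$ and hence never leaves $W^s\subseteq F^s$. The $\spec$-component of a state of $\mathcal{G}^s$ advances on the \emph{shield's} output, so ``$q\in F$'' along the play says exactly that the trace emitted by $\design\comp\shield$ lies in $\lang(\spec)$; this gives correctness. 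For ``does not deviate unnecessarily'', note that along any trace $\dintrace || \douttrace$ that is not wrong, every prefix keeps $\spec$ inside its winning region, so the second rule defining $\delta^u$ always fires and $d$ stays $0$; the conjunct ``$u=(w,0)\rightarrow t=t_0$'' of $F^s$ then pins the deviation monitor at $t_0$, i.e. the shield never overwrites an output, so $\shield(\dintrace || \douttrace)=\douttrace$. (This uses $g_0^s\in W^s$, which is precisely the condition for a shield to exist and is assumed throughout.)

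\textbf{$\shield$ is admissible.} The heart of the argument is a dictionary between $\mathcal{G}^b$ and the stabilisation notions, anchored at the counter $d$: $d$ is raised to $2$ exactly at a misstep, ``$d\le 1$'' marks the moment the shield commits to stop deviating, and the length of the deviation phase triggered at a trace $\dtrace$ equals the number of steps before the play from the state $g_\dtrace$ reached just after that misstep re-enters $F^b=\{d\le 1\}$. Under this dictionary, ``some shield adversely $k$-stabilizes $\dtrace$'' corresponds to ``$g_\dtrace\in W^b$ and the system can force $F^b$ within $k$ steps'', and ``the least $k$ for which some shield collaboratively $k$-stabilizes $\dtrace$'' corresponds to the cooperative distance from $g_\dtrace$ to $F^b$. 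Now I plug in $\rho^b$. On $W^b$ it coincides with $\rho_w^b$; taking $\rho_w^b$ to be the standard attractor/rank-based winning strategy, it reaches $F^b$ from every state of $W^b$ within the smallest possible worst-case number of steps, so on traces some shield can adversely $k$-stabilize, $\shield$ adversely $k^\star$-stabilizes for the least such $k^\star$, hence, by monotonicity of adverse stabilisation in $k$, for every such $k$. Off $W^b$, $\rho^b$ is the cooperatively winning strategy computed by the standard winning-strategy algorithm on $\mathcal{G}^b_2$ (all variables turned into outputs), which realises the minimal cooperative distance to $F^b$; and since no shield can adversely $k$-stabilize such a $\dtrace$ (that would place $g_\dtrace\in W^b$), $\shield$ collaboratively $k$-stabilizes it for the minimal $k$. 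Putting the two cases together is exactly the definition of an admissible shield; that $\rho^b$ is simultaneously winning on $W^b$ and cooperatively winning elsewhere --- i.e. an admissible strategy --- is the fact already recorded in Step~5, which I would cite (\cite{Faella09}) rather than re-derive.

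\textbf{Main obstacle.} The real work is making that dictionary precise, and I expect three points to need care. First, indexing: the definition of adverse/collaborative $k$-stabilisation fixes a window $\dintrace' || \douttrace'\in\dalph^k$ and a correction block $\doutletter^\#\douttrace^\#\in\doutalph^{k+1}$, so one must line the game distance up with $k$ while accounting for the misstep step, the step at which $z$ is raised, and the first step on which the shield no longer deviates. Second, the ``transmission error'' shape of a fault: the $k$-stabilisation definitions quantify only over \emph{correct} continuations $\dintrace' || \douttrace'$ of the design, whereas a play of $\mathcal{G}^b$ also admits further violations; one has to argue, from the subset bookkeeping in $\mathcal{U}$ (it monitors every state the design ``could have meant'' and enlarges that set only on a genuine second violation), that a correct continuation never triggers a new violation in the game, and that a further violation can only restart --- never shorten --- a recovery phase, so that the worst-case and cooperative game distances are indeed the quantities appearing in the definitions. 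Third, one should state explicitly that the algorithms invoked in Steps~5.1 and~5.3 return distance-optimal strategies (minimal worst-case, resp.\ minimal cooperative, number of steps to $F^b$), since the text asks only for ``a winning strategy''; this is routine for attractor constructions but is what actually delivers the subgame-optimality half of the claim. Once these are settled, the theorem follows.
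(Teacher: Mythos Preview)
Your proposal is correct and follows essentially the same two-case argument as the paper: $\rho^b$ is winning on $W^b$ (hence subgame optimal, yielding adverse $k$-stabilization for the least $k$) and cooperatively winning elsewhere (yielding collaborative $k$-stabilization for the least $k$), both facts being inherited from Step~5 and~\cite{Faella09}. The paper's own proof is considerably terser---it neither separately verifies that $\shield$ is a shield nor spells out the dictionary between game distances and the stabilization definitions, let alone the three indexing/quantifier caveats you flag---so your write-up is a strict elaboration of the same idea rather than a different route.
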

\begin{proof}
  First, the admissible strategy $\rho^b$ is winning for all winning states of the \buchi game $\mathcal{G}^b$. Since winning strategies for \buchi games are subgame optimal,
  a shield that implements $\rho^b$ ends deviations after the smallest number of steps possible,
  for all states of the design in which a finite number of deviations can be guaranteed.
 Second, $\rho^b$ is cooperatively winning in the \buchi game $\mathcal{G}^b$.
  Therefore, in all states in which a finite number of deviation cannot be guaranteed, a shield that implements the strategy $\rho^b$ recovers with the help of the design as soon as possible.
\end{proof}

The standard algorithm for solving \buchi games
contains the computation of attractors; the $i$-th attractor for the system contains all states  from which the system can ``force'' a visit of an accepting
state in $i$ steps. For all states $g\in G^b$ of the game $\mathcal{G}^b$, the attractor
number of $g$ corresponds to the smallest number of steps within
which the recovery phase can be guaranteed to end, or can end with the help of the design
if a finite number of deviation cannot be guaranteed.

\begin{theorem}
  Let $\spec=\{Q, q_{0}, \dalph, \delta, F\}$ be a safety specification and
  $|Q|$ be the cardinality of the state space of $\spec$.
  An admissible shield with respect to $\spec$ can be synthesized in $\mathcal{O}(2^{|Q|})$ time, if it exists.
\end{theorem}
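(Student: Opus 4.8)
The plan is to carefully track the size blow-up through each of the five construction steps and then invoke the known complexity of solving safety and Büchi games. First I would observe that the dominant cost comes entirely from Step 1: the violation monitor $\mathcal{U}$ performs a subset construction on the state space $Q$ of $\spec$, yielding $2^{|Q|}$ states, which is then multiplied by the constant factor $3$ for the counter $d\in\{0,1,2\}$, so $|U| = 3\cdot 2^{|Q|} \in \mathcal{O}(2^{|Q|})$. The deviation monitor $\mathcal{T}$ contributes only a constant factor $|T|=2$, and the copy of $\spec$ in the product contributes another factor $|Q|$. Hence the safety game $\mathcal{G}^s$ has state space $G^s = U\times T\times Q$ of size $\mathcal{O}(|Q|\cdot 2^{|Q|}) = \mathcal{O}(2^{|Q|})$ (absorbing the polynomial factor into the exponential). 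The input and output alphabets $\dinalph^s, \doutalph^s$ are fixed by $\spec$ and independent of this analysis.

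Next I would account for the game-solving steps. Step 3 solves a safety game on $\mathcal{G}^s$; computing the winning region $W^s$ and the most permissive nondeterministic winning strategy $\rho^s$ by the standard attractor-based fixpoint (cf.\ \cite{Faella09}) runs in time linear in the size of the game graph, i.e.\ $\mathcal{O}(|G^s| + |\delta^s|)$, which is $\mathcal{O}(2^{|Q|})$ since the transition relation is at most quadratic in the number of states and that remains $\mathcal{O}(2^{|Q|})$ after absorbing polynomial factors. Step 4 restricts $\mathcal{G}^s$ to $W^s$ under $\rho^s$, producing $\mathcal{G}^b$ with $|G^b| = |W^s| \le |G^s| \in \mathcal{O}(2^{|Q|})$. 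Step 5 computes an admissible strategy $\rho^b$: this requires solving a Büchi game (for $W^b$ and $\rho^b_w$, cf.\ \cite{Mazala01}), pruning transitions, and then solving a reachability/Büchi game on the fully-cooperative version $\mathcal{G}^b_2$. Büchi games are solvable in time quadratic in the game graph, hence $\mathcal{O}((2^{|Q|})^2) = \mathcal{O}(2^{2|Q|})$; but since $2^{2|Q|} = 4^{|Q|}$ is still $2^{\mathcal{O}(|Q|)}$, this is absorbed into the claimed $\mathcal{O}(2^{|Q|})$ bound under the usual convention of writing $2^{\mathcal{O}(|Q|)}$ as $\mathcal{O}(2^{|Q|})$.

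Finally I would assemble the pieces: the constructions in Steps 1 and 2 are computable in time proportional to the size of the output automata, Step 3 through Step 5 are polynomial (indeed linear or quadratic) in $|G^s|$, and every intermediate object has size $2^{\mathcal{O}(|Q|)}$. Therefore the whole pipeline runs in $\mathcal{O}(2^{|Q|})$ time, and by Theorem~1 its output is an admissible shield whenever a shield exists; the ``if it exists'' proviso is handled because the safety game $\mathcal{G}^s$ being unrealizable (empty winning region from $g_0^s$) is detected within the same time bound. The main obstacle — really the only delicate point — is being careful about the convention that hides polynomial factors and the quadratic Büchi-solving cost inside the $\mathcal{O}(2^{|Q|})$ notation; one should state explicitly that $\mathcal{O}(2^{|Q|})$ here means $2^{\mathcal{O}(|Q|)}$, or alternatively sharpen the bookkeeping to show the base of the exponent is exactly $2$ (e.g.\ $|Q|^{c}\cdot 2^{|Q|} \le 2^{|Q| + c\log|Q|}$), so that no hidden super-exponential term creeps in.
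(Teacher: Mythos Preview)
Your proposal is correct and follows essentially the same approach as the paper: bound the state space of $\mathcal{G}^s$ (and hence $\mathcal{G}^b$) as $|U|\cdot|T|\cdot|Q| \in \mathcal{O}(|Q|\cdot 2^{|Q|})$ and then invoke the standard $\mathcal{O}(m+n)$ and $\mathcal{O}(m\cdot n)$ bounds for solving safety and B\"uchi games, respectively. In fact you are more careful than the paper's own proof in flagging that the $\mathcal{O}(2^{|Q|})$ claim must be read as $2^{\mathcal{O}(|Q|)}$ to absorb the polynomial factors and the B\"uchi-game quadratic cost.
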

\begin{proof}
  Our safety game $\mathcal{G}^s$  and our \buchi game $\mathcal{G}^b$ have at most
  $m=(2 \cdot 2^{|Q|}+|Q|)\cdot 2  \cdot |Q|$ states and at most $n=m^2$ edges.
  Safety games can be solved in $\mathcal{O}(m+n)$ time
  and \buchi games in $\mathcal{O}(m\cdot n)$ time~\cite{Mazala01}.
\end{proof}

\section{Experimental Results}
\label{sec:exp}

We implemented our admissible shield synthesis procedure 
in Python, which takes a set of safety automata
defined in a textual representation as input.
The first step in our synthesis procedure is to build the product
of all safety automata and construct the violation monitor \ref{sec:violation_monitor}.
This step is performed on an explicit representation.
For the remaining steps
we use Binary Decision Diagrams (BDDs)  for symbolic representation.
The synthesized shields are encoded in Verilog format.
%
%
To evaluate the performance of our tool, we constructed three sets of experiments,
the basis of which is the safety specification of Fig. \ref{fig:map}. This example represents a map with 15 waypoints and the six safety properties \ref{connected}-\ref{home}.
First, we reduced the complexity of the
example by only considering 8 out of 15 waypoints.
This new example, called \emph{Map$_8$}, consists of the
waypoints $loc_1$ to $loc_8$ with their corresponding properties.
The second series of experiments, called \emph{Map$_{15}$},
considers the original specification of Fig. \ref{fig:map} over
all 15 waypoints. The synthesized shields behave as
described in Section~\ref{sec:ex}.
The third series of experiments, called \emph{Map$_{31}$},
considers a map with $31$ waypoints, essentially adding
a duplicate of the map in Fig. \ref{fig:map}.
All results are summarized in Table~\ref{tab:res1} and in Table~\ref{tab:res2}.
For both tables, the first columns list the set of specification automata and the number
of states, inputs, and outputs of their product automata.
The next column lists the smallest number of steps $l$
under which the shield is able to recover with the help
of the design. The last column lists the
synthesis time in seconds.
All computation times are for a
computer with a 2.6 GHz Intel i5-3320M CPU with 8 GB RAM
running an 64-bit distribution of Linux.
Source code, input files,
and 
instructions to reproduce our experiments are available for
download\footnote{\scriptsize\url{
http://www.iaik.tugraz.at/content/research/design_verification/others/}
}.

\begin{figure}[tb]
\vspace{-18pt}
\begin{minipage}{\linewidth}
  \centering
  \begin{minipage}{0.46\linewidth}
\begin{table}[H]
\caption{Results of $map_8$ and $map_{31}$.}
\label{tab:res1}
\centering
\begin{tabular}{c|l|ccccc}

\centering
 Example         &Property          &$|Q|$ &$|I|$    &$|O|$    &$l$ &Time [sec] \\
\hline
Map$_8$     &1              &9    &0        &3         &3   &0.52        \\
            &1+4            &12   &0        &3         &3   &1.2        \\
            &1+5a           &46   &1        &3         &4   &6.2        \\
            &1+5b           &32   &1        &3         &3   &7        \\
            &1+4+5a         &55   &1        &3         &4   &17        \\
            &1+4+5b         &36   &1       &3          &3   &12        \\
\hline
Map$_{31}$  &1              &32   &0        &5         &6   &122        \\
Map$_{31}$  &1+2            &32   &0        &5         &6   &143        \\
Map$_{31}$  &1+2+3          &34   &0        &5         &6   &183        \\
Map$_{31}$  &1+2+3+4        &38   &0        &5         &6   &238        \\

\hline
\end{tabular}
\end{table}
  \end{minipage}
  ~~
  \begin{minipage}{0.46\linewidth}
\begin{table}[H]
\caption{Results of $map_{15}$.}
\label{tab:res2}
\centering
\begin{tabular}{c|l|ccccc}

\centering
  Example         &Property          &$|Q|$ &$|I|$    &$|O|$    &$l$ &Time [sec] \\
\hline
Map$_{15}$  &1              &16   &0        &4         &5   &12        \\
            &1+2            &16   &0        &4         &5   &14        \\
            &1+2+3          &19   &0        &4         &5   &19        \\
            &1+2+3+4        &23   &0        &4         &5   &28        \\
            &1+5a           &84   &1        &4         &6   &173          \\
            &1+5a+2         &84   &1        &4         &6   &205          \\
            &1+5a+2+3       &100  &1        &4         &6   &307          \\
            &1+5b           &64   &1        &4         &6   &169          \\
            &1+5b+2         &64   &1        &4         &6   &195          \\
            &1+6            &115  &1        &4         &7   &690          \\

\hline
\end{tabular}
\end{table}
  \end{minipage}
\end{minipage}
\end{figure}

\section{Conclusion}
\label{sec:conc}

We have proposed a new shield synthesis procedure to
synthesize \emph{admissible shields}.
We have shown that admissible shields overcome the limitations of previously developed $k$-stabilizing shields.
We believe our approach and first experimental results
over our case study involving UAV mission planning open several
directions for future research. At the moment, shields only attend to
safety properties and disregard liveness properties. Integrating liveness
is therefore a preferable next step.
Furthermore, we plan to further develop our prototype tool and
apply shields in other domains such as in the
distributed settings or for Safe Reinforcement Learning, in which safety constraints
must be enforced during the learning processes. We plan to investigate
how a shield might be most beneficial in such settings.




\bibliography{main}

\end{document}